

\documentclass[conference,letterpaper]{IEEEtran}
\IEEEoverridecommandlockouts
\usepackage{cite}
\usepackage{amsmath}
\usepackage{amsfonts}
\usepackage{amssymb}
\usepackage{algorithm}
\usepackage{algorithmic}
\usepackage{graphicx}
\usepackage{verbatim}
\usepackage{amsthm}
\usepackage{xcolor}
\usepackage{multirow}

\newtheorem{theorem}{Theorem}
\newtheorem{lemma}{Lemma}
\newtheorem{remark}{Remark}

\newtheorem{example}{Example}

\newcommand{\beq}{\begin{equation}}
\newcommand{\eeq}{\end{equation}}
\newcommand{\beqnn}{\begin{equation*}}
\newcommand{\eeqnn}{\end{equation*}}
\newcommand{\beqy}{\begin{eqnarray}}
\newcommand{\eeqy}{\end{eqnarray}}
\newcommand{\beqynn}{\begin{eqnarray*}}
\newcommand{\eeqynn}{\end{eqnarray*}}
\newcommand{\bit}{\begin{itemize}}
\newcommand{\eit}{\end{itemize}}
\newcommand{\ben}{\begin{enumerate}}
\newcommand{\een}{\end{enumerate}}
\newcommand{\bex}{\begin{example}}
\newcommand{\eex}{\end{example}}


\newcommand{\balg}[1]{\begin{algorithm} \caption{#1}}
\newcommand{\ealg}{\end{algorithm}}

\newcommand{\balgc}{\begin{algorithmic}[1]}
\newcommand{\ealgc}{\end{algorithmic}}

\newcommand{\bary}{\begin{array}}
\newcommand{\eary}{\end{array}}
\newcommand{\bmx}{\begin{bmatrix}}
\newcommand{\emx}{\end{bmatrix}}
\newcommand{\bsmx}{\left[\begin{smallmatrix}}
\newcommand{\esmx}{\end{smallmatrix}\right]}
\newcommand{\bmxc}[1]{\left[\begin{array}{@{}#1@{}}}
\newcommand{\emxc}{\end{array}\right]}
\newcommand{\bcn}{\begin{center}}
\newcommand{\ecn}{\end{center}}






\newcommand{\Rbb}{{\mathbb{R}}}
\newcommand{\Zbb}{{\mathbb{Z}}}

\newcommand{\bigO}{{\mathcal{O}}}

\newcommand{\Rnbn}{\Rbb^{n \times n}}

\newcommand{\Rmbn}{\Rbb^{m \times n}}

\newcommand{\Zn}{\Zbb^{n}}



\newcommand{\A}{\boldsymbol{A}}

\newcommand{\Q}{\boldsymbol{Q}}
\newcommand{\R}{\boldsymbol{R}}

\newcommand{\Z}{\boldsymbol{Z}}


\newcommand{\x}{{\boldsymbol{x}}}
\newcommand{\y}{{\boldsymbol{y}}}

\newcommand{\0}{{\boldsymbol{0}}}











\hyphenation{op-tical net-works semi-conduc-tor}
\usepackage{slashbox}

\allowdisplaybreaks

\begin{document}

\title{Improved Upper Bounds \\ on the Hermite and KZ Constants}

\author{\IEEEauthorblockN{Jinming~Wen\IEEEauthorrefmark{1} \IEEEauthorrefmark{2},
Xiao-Wen~Chang\IEEEauthorrefmark{3}  and Jian Weng\IEEEauthorrefmark{1}}
\IEEEauthorblockA{\IEEEauthorrefmark{1}
College of Information Science and Technology and the College of Cyber Security, Jinan University, \\
Guangzhou, 510632, China  (E-mail:jinming.wen@mail.mcgill.ca, cryptjweng@gmail.com)}
\IEEEauthorblockA{\IEEEauthorrefmark{2}State Key Laboratory of Information Security, Institute of Information Engineering,\\ Chinese Academy of Sciences, Beijing 100093}
\IEEEauthorblockA{\IEEEauthorrefmark{3}School of Computer Science, McGill University,
Montreal, H3A 0E9, Canada (E-mail: chang@cs.mcgill.ca)}

\thanks{This work was partially supported by  NSERC of Canada grant RGPIN-2017-05138,
NNSFC (No. 11871248),
and the Fundamental Research Funds for the Central Universities (No. 21618329).}}

\maketitle

\begin{abstract}
The Korkine-Zolotareff (KZ) reduction is a widely used lattice reduction strategy  in communications and cryptography.
The Hermite constant, which is a vital constant of lattice, has many applications,
such as bounding the length of the shortest nonzero lattice vector and orthogonality defect of lattices.
The KZ constant can be used in quantifying some useful properties of KZ reduced matrices.
In this paper, we first develop a linear upper bound on the Hermite constant
and then use the bound to develop an upper bound on the KZ constant.
These upper bounds are sharper than those obtained recently by the first two authors.
Some examples on the applications of the improved upper bounds are also presented.
\end{abstract}

\begin{IEEEkeywords}
KZ reduction,  Hermite constant, KZ constant.
\end{IEEEkeywords}


%
\IEEEpeerreviewmaketitle

\section{Introduction}
\label{s:introduction}
The lattice  generated by a matrix $\A\in \mathbb{R}^{m\times n}$ with full-column rank
is defined by
\beq
\label{e:latticeA}
\mathcal{L}(\A)=\{\A\x \,|\,\x \in \mathbb{Z}^n\}.
\eeq
The column vectors of $\A$ and $n$ represent the basis and dimension of $\mathcal{L}(\A)$, respectively.

A matrix $\Z \in \mathbb{Z}^{n\times n}$ satisfying $|\det(\Z)|=1$ is said to be unimodular.
For any unimodular $\Z \in \mathbb{Z}^{n\times n}$, $\mathcal{L}(\A\Z)$ is the same lattice as $\mathcal{L}(\A)$.
Lattice reduction is the process of finding a unimodular $\Z$ such that the column vectors of
$\A\Z$ are short.
There are a few types of lattice reduction strategies.
The Lenstra-Lenstra-Lov\'asz (LLL) reduction and the Korkine-Zolotareff (KZ) reduction are two of the most popular ones,
and they have crucial applications in many domains including communications \cite{AgrEVZ02} and cryptography \cite{MicR08}.

For efficiency, the LLL reduction is often used to preprocess the matrix $\A$ when a closest vector problem (CVP), which is defined as
\beqnn
\min_{\x\in\mathbb{Z}^n}\|\y-\A\x\|_2,
\eeqnn
needs to be solved.
In some communications applications,  a number of  CVPs
with the same matrix $\A$ but different $\y$ need to be solved.
In this situation, for efficiency, instead of the LLL reduction, the KZ reduction is applied to preprocess $\A$.
The reason is that although it is more time consuming to perform the KZ reduction than the LLL reduction,
the reduced matrix of the KZ reduction has better properties than the one obtained by the LLL reduction,
and hence the total computational time of solving these CVPs by using the KZ reduction may be less than that of
using the LLL reduction.
Furthermore, the KZ reduction finds applications in successive integer-forcing linear receiver design \cite{OrdEN13}
and integer-forcing linear receiver design \cite{SakHV13}.

It is interesting to quantify the performance of the KZ reduction in terms of shortening the lengths of the lattice vectors
and reducing the orthogonality defects of the basis matrices of lattices.
The KZ constant, defined by Schnorr in \cite{Sch87}, is a measure of the quality of KZ reduced matrices.
It can be used to bound the lengths of the column vectors of KZ reduced matrices from above \cite{LagLS90}, \cite{WenC18}.
In addition to this, the KZ constant has applications in bounding
the decoding radius and the proximity factors of KZ-aided successive interference cancellation
(SIC) decoders from below \cite{WenC18, Lin11, LuzSL13}.
Although the KZ constant is an important quantity, there is no formula for it.
Fortunately, it has several upper bounds \cite{Sch87}, \cite{HanS08},  \cite{WenC18}.
The first main aim of this paper is to improve the sharpest existing upper bound presented in \cite{WenC18}.

The Hermite constant can be used to quantify the length of the shortest nonzero vector of lattices.
Since estimating the length of the shortest vector in a lattice is a NP-hard problem \cite{Ajtai96},
this application of Hermite constant is of vital importance.
It also has applications in bounding the KZ constant from above \cite{Sch87}.
Furthermore, it can be used to derive lower bounds on
the decoding radius of the LLL-aided SIC decoders \cite{WenC18, LuzSL13},
and upper bounds on the orthogonality defect of KZ reduced matrices \cite{LagLS90}, \cite{WenC18}, \cite{LyuL17}.
Although the Hermite constant is important, its exact values are known for dimension $1\leq n\leq 8$ and $n=24$ only.
Thus, its upper bound for arbitrary integer $n$ is needed.
In the above applications, the Hermite constant's linear upper bounds play crucial roles.
Hence, in addition to the nonlinear upper bound \cite{Bli14}, several linear upper bounds on the Hermite constant
have been proposed in \cite{LagLS90}, \cite{NguV10}, \cite{Neu17}.
The second main aim of this paper is to improve the sharpest available linear upper bound provided in \cite{WenC18}.


The reminder of the paper is organized as follows.
Sections \ref{s:HC} and \ref{s:KZ} develop a new  linear upper bound on the Hermite constant
and a new upper bound on the KZ constant, respectively.
Finally, this paper is summarized in Section \ref{s:sum}.


{\it Notation.}
Let $\mathbb{R}^{m\times n}$ and $\mathbb{Z}^{m\times n}$ be the spaces of the $m\times n$ real matrices and integer matrices, respectively.
Boldface lowercase letters denote column vectors and boldface uppercase letters denote matrices.
For a matrix $\A$, we use $a_{ij}$ to denote its $(i,j)$ entry.
$\Gamma(n)$ denotes the Gamma function.

\section{A sharper linear bound on the Hermite constant}
\label{s:HC}

This section develops a new linear upper bound on the Hermite constant.
that is sharper than \cite[Theorem 1]{WenC18} when $n\geq 109$.

We first introduce the definition of the Hermite constant.
Denote the set of $m\times n$ real matrices with full-column rank by $\mathbb{R}_n^{m\times n}$.
The Hermite constant $\gamma_n$ is defined as
\[
\gamma_n=\sup_{\A\in\mathbb{R}_n^{m\times n}}\frac{(\lambda (\A))^2}{(\det(\A^T\A))^{1/n}},
\]
where $\lambda(\A)$ represents the length of a shortest nonzero vector of ${\cal L}(\A)$, i.e.,
$$
\lambda(\A)=\min_{\x\in \Zn\backslash \{\0\}} \|\A\x\|_2.
$$

Although the Hermite constant is a vital important constant of lattices, the values of $\gamma_n$
are known only for $n=1,\ldots,8$ \cite{Mar13} and $n=24$ \cite{CohA04}
(see also \cite[Table 1]{WenC18}).
Fortunately, there are some upper bounds on $\gamma_n$ for any $n$ in the literature and the sharpest one is
\beq
\label{e:blichfeldt}
\gamma_n \leq \frac{2}{\pi} (\Gamma(2+n/2))^{2/n},
\eeq
given by Blichfeldt \cite{Bli14}.

As explained in Section \ref{s:introduction}, linear upper bounds on $\gamma_n$ are very useful.
There are several linear upper bounds: $\gamma_n \leq \frac{2}{3}n$ (for $n\geq 2$) \cite{LagLS90};
$\gamma_n \leq 1 + \frac{n}{4}$ (for $n\geq 1$) \cite[p.35]{NguV10} and
$\gamma_n \leq \frac{n+6}{7} $ (for $n\geq 2$) \cite{Neu17}.
The most recent linear upper bound on $\gamma_n$ is
\beq
\label{e:gammanWC}
\gamma_n < \frac{n}{8}+\frac{6}{5}, \,\; n\geq 1,
\eeq
given in \cite[Theorem 1]{WenC18}.

The following theorem gives a new linear upper bound on $\gamma_n$,
which is sharper than \eqref{e:gammanWC} when $n\geq109$.

\begin{theorem}
\label{t:gamman}
For $n\geq 1$,
\beq
\label{e:gamman}
\gamma_n < \frac{n}{8.5}+2.
\eeq
\end{theorem}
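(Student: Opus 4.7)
The plan is to deduce the theorem from Blichfeldt's nonlinear bound \eqref{e:blichfeldt}. Since $\gamma_n \leq \frac{2}{\pi}(\Gamma(2+n/2))^{2/n}$, it suffices to establish
\[
\frac{2}{\pi}\bigl(\Gamma(2+n/2)\bigr)^{2/n} < \frac{n}{8.5} + 2
\]
for every positive integer $n$. The choice of slope $1/8.5$ is motivated by Stirling's asymptotic formula: as $n\to\infty$, $(\Gamma(2+n/2))^{2/n}\sim n/(2e)$, so the Blichfeldt upper bound grows like $n/(\pi e)$, and $\pi e \approx 8.5397 > 8.5$. Thus the slope of the target linear bound strictly exceeds the asymptotic slope of Blichfeldt's bound, producing a linearly growing surplus $\bigl(\tfrac{1}{8.5}-\tfrac{1}{\pi e}\bigr)n$ that must absorb all sub-linear terms coming from Stirling's remainder, together with the discrepancy between the constant $2$ and $\tfrac{2}{\pi e}$.

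I would split the argument into two regimes separated by a threshold $n_0$. For $n \leq n_0$ I would verify the inequality by direct numerical evaluation of $\Gamma(2+n/2)$, using where possible the exact values of $\gamma_n$ known for $n\in\{1,\dots,8,24\}$ cited earlier in the paper, so that only finitely many explicit comparisons remain. For $n > n_0$ I would apply a sharp explicit form of Stirling's inequality such as
\[
\Gamma(z+1) < \sqrt{2\pi z}\,(z/e)^z \exp\!\bigl(1/(12z)\bigr),
\]
with $z = n/2+1$, and raise both sides to the power $2/n$ to obtain a closed-form upper bound $U(n)$ for $(\Gamma(2+n/2))^{2/n}$. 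It then suffices to prove $\tfrac{2}{\pi}U(n) < \tfrac{n}{8.5} + 2$ for $n > n_0$, which reduces to an elementary inequality of the form $\tfrac{n+2}{\pi e}\,R(n) < \tfrac{n}{8.5} + 2$ with a correction factor $R(n)\to 1$ from above at a controlled rate.

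The main obstacle is the delicate matching of the two regimes: $n_0$ must be large enough that the Stirling residual $R(n)-1$ is overtaken by the linear surplus plus the additive budget $2-\tfrac{2}{\pi e}$, yet small enough to make the finite verification feasible. Since the linear surplus has coefficient only $\tfrac{1}{8.5}-\tfrac{1}{\pi e} \approx 5.6\times 10^{-4}$, there is very little slack, so the threshold will be moderately large (consistent with the remark preceding the theorem that the new bound beats \eqref{e:gammanWC} only for $n\geq 109$). To close the large-$n$ case cleanly I expect it will be simplest to reduce the inequality $\tfrac{n}{8.5}+2-\tfrac{2}{\pi}U(n) > 0$ to an assertion that is monotone in $n$ beyond $n_0$, verified by an elementary derivative or ratio comparison, so that the full theorem follows by combining monotonicity from $n_0$ onward with the direct checks below $n_0$.
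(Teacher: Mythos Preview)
Your plan is sound and would go through, but it is \emph{not} the route the paper takes; in fact, Remark~\ref{r:prof} in the paper explicitly contrasts the two. You propose to bound $\Gamma(2+n/2)$ directly via Stirling's inequality and then compare the resulting explicit $U(n)$ to the linear target --- this is precisely the method the authors used in~\cite{WenC18} to obtain the weaker bound~\eqref{e:gammanWC}. The present paper instead sets $\phi(t)=\bigl[\tfrac{\pi}{8.5}(t+8.5)\bigr]^t/\Gamma(2+t)$, verifies $\phi(t)>1$ numerically for $t=0.5,1,\dots,310$ (i.e.\ $n\le620$), and for $t\ge310$ shows $(\ln\phi)'(t)\ge0$ by bounding the digamma function via Batir's inequality $\psi(t+2)\le\ln(t+e^{1-\gamma})$. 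So the large-$n$ regime is handled by a monotonicity argument on $\phi$ driven by a digamma bound, not by a Stirling bound on $\Gamma$ itself.

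What each buys: your Stirling route is self-contained and avoids importing the digamma inequality, but the bookkeeping of the correction factor $R(n)$ is heavier because the surplus $\tfrac{1}{8.5}-\tfrac{1}{\pi e}$ is tiny. The paper's digamma route reduces the tail to the positivity of a simple explicit function $\rho(t)$ whose monotonicity is a one-line derivative computation. Two small calibration points: (i) the finite check in the paper runs to $n=620$, not $n\approx109$ as you anticipate, so expect your $n_0$ to be of comparable size; (ii) invoking the exact values of $\gamma_n$ for $n\le8$ and $n=24$ is unnecessary here, since the whole argument passes through Blichfeldt's bound~\eqref{e:blichfeldt} anyway.
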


\begin{proof}
By \eqref{e:blichfeldt}, to show \eqref{e:gamman}, it suffices to show
\[
\left(\Gamma\left(2+\frac{n}{2}\right)\right)^{2/n} <  \frac{\pi(n+17)}{17},
\]
which is equivalent to
\beq
\label{e:lbd}
\Gamma\left(2+\frac{n}{2}\right) < \left(\frac{\pi(n+17)}{17}\right)^{n/2}.
\eeq
Then, to show \eqref{e:gamman}, it is equivalent to show that
\[
\phi(t):=\frac{\left[\frac{\pi}{8.5}(t+8.5)\right]^{t}}{\Gamma\left(2+t\right)} > 1
\]
for $t= 0.5, 1, 1.5, 2, 2.5, \ldots$.

By some direct calculations, one can show that
\[
\phi(t)>1, \mbox{ for }t= 0.5, 1.5, 2, 2.5, \ldots, 310.
\]
Thus, to show \eqref{e:gamman}, we only need to show that $\phi(t)$ or $\bar{\phi}(t):=\ln(\phi(t))$
is monotonically  increasing when $t\geq 310$.

By some direct calculations, we have
\begin{align*}
\bar{\phi}'(t)= \ln \left[\frac{\pi}{8.5}(t+8.5)\right] +\frac{t}{t+8.5}-\psi(t+2),
\end{align*}
where $\psi(t+2)$ is the digamma function, i.e., $\psi(t+2)=\Gamma'(t+2)/\Gamma(t+2)$.
Then, to show \eqref{e:gamman}, we only need to show that $\bar{\phi}'(t)\geq 0$ when $t\geq 310$.
To achieve this, we use the following inequality from \cite[eq. (1.7) in Lemma 1.7]{Bat08}:
\beq
\label{e:digammaud}
\psi(t+2)\leq \ln (t+e^{1-\gamma}), \mbox{ for } t\geq 0,
\eeq
where $\gamma=\lim_{n\rightarrow \infty} (-\ln n + \sum_{k=1}^n 1/k)$, which is referred to as Euler's constant.
 Then, from the expression of $\bar{\phi}'(t)$ given before, we have
 $$
 \bar{\phi}'(t)\geq \rho(t),
$$
 where
\begin{align*}
\rho(t):&=\ln \left[\frac{\pi(t+8.5)}{8.5}\right] +\frac{t}{t+8.5}-\ln (t+e^{1-\gamma})\\
&=\ln (t+8.5)-\frac{8.5}{t+8.5}-\ln (t+e^{1-\gamma}) +\ln\frac{\pi e}{8.5}.
\end{align*}
Since
\begin{align*}
\rho'(t) & = \frac{1}{t+8.5} +\frac{8.5}{(t+8.5)^2} - \frac{1}{t+e^{1-\gamma}} \\
&=  \frac{(t+8.5)(t+e^{1-\gamma})+8.5(t+e^{1-\gamma})}{(t+8.5)^2(t+e^{1-\gamma})} \\
&\quad-  \frac{(t+8.5)^2}{(t+8.5)^2(t+e^{1-\gamma})} \\
&=\frac{e^{1-\gamma}t-(72.25-17e^{1-\gamma})}{(t+8.5)^2(t+e^{1-\gamma})},
\end{align*}
and $\gamma<0.58$ \cite{Leh75}, $\rho'(t)\geq0$ when $t>31$ as
\begin{align*}
&e^{1-\gamma}t-(72.25-17e^{1-\gamma})\\
>&31\times e^{1-\gamma}-(72.25-17e^{1-\gamma})>0.
\end{align*}
Thus, for $t\geq 310$, we have
\[
\bar{\phi}'(t)\geq \rho(t)\geq \rho(310) > 0.0000796>0,
\]
where the third inequality follows form the fact that $\gamma>0.57$ \cite{Leh75}.
\end{proof}

By some simple calculations, one can easily see that the upper bound \eqref{e:gamman} is sharper than
the upper bound \eqref{e:gammanWC} when $n\geq 109$.
When $n\leq 108$,  \eqref{e:gammanWC} is sharper than \eqref{e:gamman}, but their difference is small.
By the Stirling's approximation for Gamma function, the right-hand side of \eqref{e:blichfeldt}
is  asymptotically $\frac{n}{\pi e}\approx\frac{n}{8.54} $. Thus, the linear bound given by \eqref{e:gamman}
is very close to the nonlinear upper bound given by \eqref{e:blichfeldt}.
To clearly show the improvement of \eqref{e:gamman} over \eqref{e:gammanWC} and how close \eqref{e:gamman} is
to \eqref{e:blichfeldt},  in Figure \ref{f:HC} we plot the ratios of the two bounds to
Blichfeldt's bound given by \eqref{e:blichfeldt}:
$$
\mbox{Ratio 1} = \frac{\frac{n}{8}+\frac{6}{5}}{\frac{2}{\pi} (\Gamma(2+n/2))^{2/n}},  \;
\mbox{Ratio 2} = \frac{\frac{n}{8.5}+2}{\frac{2}{\pi} (\Gamma(2+n/2))^{2/n}}.
$$
\begin{figure}[!htbp]
\centering
\includegraphics[width=3.4 in]{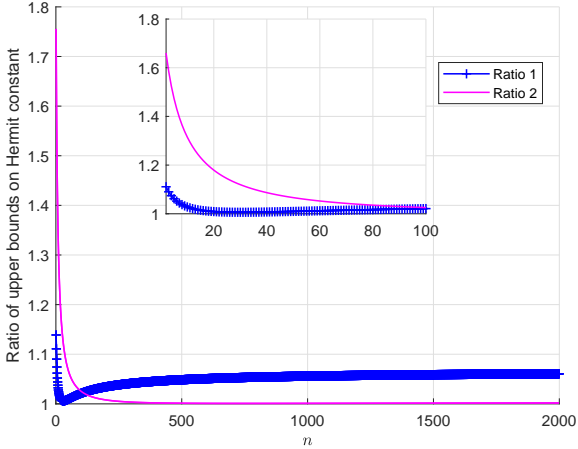}
\caption{The ratio of the bounds given by \eqref{e:gamman} and  \eqref{e:gammanWC} to
Blichfeldt's bound in \eqref{e:blichfeldt}  versus $n$}
\label{f:HC}
\end{figure}

From Figure \ref{f:HC}, one can see that the upper bound given by \eqref{e:gamman} is very close to the nonlinear
upper bound given by \eqref{e:blichfeldt}, and \eqref{e:gamman} improves \eqref{e:gammanWC} for $n\geq 109$.

In the following, we give some remarks.
\begin{remark}
\label{r:prof}
The approach used by the proof for  \eqref{e:gamman}
is different from that for  \eqref{e:gammanWC}  used in \cite{WenC18}.
To show \eqref{e:gammanWC}, it suffices to show (cf.\ \eqref{e:lbd})
\beq \label{e:gubdWC}
\Gamma\left(2+\frac{n}{2}\right) < \left(\frac{\pi(n+9.6)}{16}\right)^{n/2}.
\eeq
The proof for \eqref{e:gubdWC} first gives an upper bound on $\Gamma\left(2+\frac{n}{2}\right)$
and then shows the right-hand side of
\eqref{e:gubdWC} is larger than this upper bound,
while  the proof for  \eqref{e:lbd} here shows $\phi(t)$ is a monotonically  increasing function
by using an upper bound on the digamma function (see \eqref{e:digammaud}).
\end{remark}

\begin{remark}
\label{r:LLLdr}
The improved linear upper bound \eqref{e:gamman} on $\gamma_n$ can be used to improve
the lower bound on the decoding radius of the LLL-aided  SIC decoder
that was given in \cite{WenC18},
which is an improvement of the one given  in  \cite[Lemma 1]{LuzSL13}.
Since the derivation for the new  lower bound on the decoding radius is straightforward
by following the proof of \cite[Lemma 1]{LuzSL13} and using \eqref{e:gamman},
we do not provide details.
\end{remark}

\begin{remark}
\label{r:LLLOD}
The improved linear upper bound \eqref{e:gamman} on $\gamma_n$ can be used to improve
the upper bound on the orthogonality defect of KZ reduced matrices
that was presented in \cite[Theorem 4]{WenC18}.
Note that the orthogonality defect of a matrix is a good measure of the orthogonality
of the matrix and hence it is often used in characterizing the quality of
a LLL or KZ reduced matrix.
\end{remark}

\section{A sharper bound on the KZ constant}
\label{s:KZ}

In this section, we develop an upper bound on the KZ constant that is sharper than that given by \cite[Theorem 2]{WenC18}.

We first briefly introduce the definition of the KZ reduction.
Suppose that $\A$ in \eqref{e:latticeA} has the following thin QR factorization (see, e.g., \cite[Chap.\ 5]{GolV13}):
\beq
\label{e:QR}
\A= \Q \R,
\eeq
where $\Q\in \Rmbn$ has orthonormal columns and $\R\in \Rnbn$ is nonsingular upper triangular,
and they are respectively referred to as $\A$'s Q-factor and R-factor.
If $\R$ in \eqref{e:QR} satisfies:
\begin{align}
&|r_{ij}|\leq\frac{1}{2} |r_{ii}|, \ \ 1\leq i \leq j-1 \leq n-1,  \\
&| r_{ii}| =\min_{\x\,\in\,\mathbb{Z}^{n-i+1}\backslash \{\0\}}\|\R_{i:n,i:n}\x\|_2, \ \
1\leq i \leq n,
\end{align}
then $\A$ and $\R$ are said to be  KZ reduced.
Given $\A\in \mathbb{R}_n^{m\times n}$, the KZ reduction is the process of finding a unimodular matrix
$\Z\in \mathbb{Z}^{n\times n}$ such that $\A\Z$ is  KZ reduced.

Let $\mathcal{B}_{KZ}$ denote the set of all $m\times n$ KZ reduced matrices with full-column rank.
The KZ constant   is  defined as \cite{Sch87}
\beq
\label{e:KZconstant}
\alpha_n=\sup_{\A\in \mathcal{B}_{KZ}} \frac{(\lambda (\A))^2}{r_{nn}^2},
\eeq
where $\lambda (\A)$ denotes the length of the shortest nonzero vector of $\mathcal{L}(\A)$,
and $r_{nn}$ is the last diagonal entry of the R-factor $\R$ of $\A$ (see \eqref{e:QR}).

As explained in Section \ref{s:introduction}, the KZ constant is an important quantity  for characterizing
some properties of KZ reduced matrices.  However, its exact value is unknown.
Hence, it is useful to find a good upper bound on it.
Schnorr in \cite[Corollary 2.5]{Sch87} proved that
\[
\alpha_n\leq n^{1+\ln n}, \ \ \mbox{for } n\geq 1;
\]
 Hanrot and Stehl{\'e} in \cite[Theorem 4]{HanS08} showed that
\[
\alpha_n\leq n\prod_{k=2}^nk^{1/(k-1)} \leq n^{\frac{\ln n}{2}+\bigO(1)}, \ \ \mbox{for } n\geq 2;
\]
Based on the exact value  of $\gamma_n$  for $1\leq n\leq 8$ and the upper bound
on $\gamma_n$  in \eqref{e:gammanWC} for $n\geq 9$, Wen and Chang in \cite[Theorem 2]{WenC18}
 showed that
\beq
\label{e:KZconstantUB1}
\alpha_n\leq 7 \left( \frac{1}{8}n +\frac{6}{5}\right)  \left(\frac{n-1}{8}\right)^{\frac{1}{2}\ln((n-1)/8)}, \ \ \mbox{for } n\geq 9.
\eeq


In the following theorem we provide a new upper bound on $\alpha_n$ for $n\geq109$,
which is sharper than that in \eqref{e:KZconstantUB1} for $n\geq 111$.
The new bound on $\alpha_n$ is based on the new upper bound on
the Hermite constant $\gamma_n$ \eqref{e:gamman},
which is sharper than that in \eqref{e:gammanWC}  for $n\geq 109$.

\begin{theorem}
\label{t:KZconstantUB}
The KZ constant $\alpha_n$ satisfies
\beq
\label{e:KZconstantUB2}
\alpha_n\leq  8.1 \left( \frac{n}{8.5} +2\right)  \left(\frac{2n-1}{17}\right)^{\frac{1}{2}\ln((2n-1)/17)}, \ \ \mbox{for } n\geq 109.
\eeq
\end{theorem}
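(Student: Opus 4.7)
The plan is to adapt the proof of \cite[Theorem 2]{WenC18}, replacing the older linear Hermite bound \eqref{e:gammanWC} by the sharper bound \eqref{e:gamman} established in Theorem \ref{t:gamman}, and recalibrating the numerical constants accordingly. The starting index $n\geq 109$ in the statement matches the threshold from Section \ref{s:HC} at which \eqref{e:gamman} actually improves on \eqref{e:gammanWC}, so the inductive base case gets pushed from $n=9$ up to $n=109$, and the numerical factor $7$ in \eqref{e:KZconstantUB1} will have to be absorbed into a slightly larger prefactor $8.1$.

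First I would recall the recursive relation on $\alpha_n$ (traceable to \cite{Sch87}, sharpened in \cite{HanS08}) that underlies \eqref{e:KZconstantUB1}. For a KZ reduced $\A$ with R-factor $\R$, one expresses $\lambda(\A)^2/r_{nn}^2$ in terms of the diagonal ratios $r_{kk}^2/r_{nn}^2$, and then bounds each such ratio through a Hermite constant of a trailing submatrix. Iterating produces an inequality of the schematic form
\[
\alpha_n \;\leq\; C_0\,\gamma_n \prod_{k=K_0+1}^{n-1}\gamma_k^{\,1/(k-1)},
\]
where $K_0$ is the largest index up to which a direct numerical bound on $\alpha_{K_0}$ and on the $\gamma_k$'s is absorbed into the prefactor $C_0$. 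This is exactly the framework used in \cite{WenC18} to obtain \eqref{e:KZconstantUB1}.

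Next I would split the product at $k=108$. For $k\leq 108$ the existing bound \eqref{e:gammanWC}, together with the exact values of $\gamma_k$ at $k\leq 8$ and $k=24$, remains at least as good as \eqref{e:gamman}, so this part contributes a fixed numerical constant that I would compute and combine with $C_0$. For $k\geq 109$ I would substitute the new bound $\gamma_k<\tfrac{k}{8.5}+2=\tfrac{2(k+17)}{17}$, take logarithms, and compare the sum $\sum_{k=109}^{n-1}\frac{1}{k-1}\ln\bigl(\tfrac{k}{8.5}+2\bigr)$ with the integral $\int_{108}^{n-1}\frac{\ln((t+17)/8.5)}{t}\,dt$. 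An elementary calculation of this integral produces a leading term of the form $\tfrac{1}{2}\bigl(\ln((2n-1)/17)\bigr)^2$, whose exponentiation gives exactly the factor $\bigl((2n-1)/17\bigr)^{(1/2)\ln((2n-1)/17)}$ appearing in \eqref{e:KZconstantUB2}. Isolating the last factor $\gamma_n$ as $\tfrac{n}{8.5}+2$ then yields the claimed bound, and a direct numerical check at $n=109$ settles the base case.

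The main obstacle will be the careful bookkeeping of the overall multiplicative constant: I need to verify that all contributions coming from $k\leq 108$ (including the $\alpha_{108}$ base value and the product of the sharper $\gamma_k$'s in that range) together with the subleading error incurred when replacing the sum by the integral for $k\geq 109$ all fit comfortably inside $8.1$. This is a purely numerical but delicate accounting, and it is what fixes the specific triple $(8.5,\,2,\,8.1)$ and the shifted argument $(2n-1)/17$ in place of the old $(8,\,6/5,\,7)$ and $(n-1)/8$ in \eqref{e:KZconstantUB1}. Once this constant-chasing is carried out and the integral comparison is written out cleanly, the statement of Theorem \ref{t:KZconstantUB} follows by the same inductive scheme as in \cite[Theorem 2]{WenC18}.
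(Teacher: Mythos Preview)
Your overall plan matches the paper's proof: start from Schnorr's product bound, split at $k=108$, use exact $\gamma_k$ for $k\le 8$ and \eqref{e:gammanWC} for $9\le k\le 108$, use \eqref{e:gamman} for $k\ge 109$, convert the resulting logarithmic sum to an integral, and check that the accumulated constant fits under $8.1$. Two points need correction, however. First, there is no induction and no ``$\alpha_{108}$ base value'': the starting inequality is the direct bound $\alpha_n\le \gamma_n\prod_{k=2}^{n}\gamma_k^{1/(k-1)}$ (the product runs to $n$, not $n-1$), so the only ``base'' input is the numerical value of $\prod_{k=2}^{108}\gamma_k^{1/(k-1)}$, not any $\alpha_{K_0}$. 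Second, the sum-to-integral step is the one place where the paper does something new relative to \cite{WenC18}: instead of using monotonicity of $\omega(t)=\tfrac{1}{t}\ln\!\bigl((t+18)/8.5\bigr)$ to get $\sum_k\omega(k)\le\int_{107}^{n-1}\omega$, the paper invokes the midpoint inequality for convex integrands (Lemma~\ref{l:integralbd2}) to obtain $\sum_k\omega(k)\le\int_{107.5}^{n-0.5}\omega$. The gain from shifting the lower endpoint by $0.5$ is small, but the paper's constant comes out to $8.0911\ldots$, so the looser monotonicity bound (as in \cite{WenC18}) would overshoot $8.1$ for large $n$. You should make this convexity step explicit rather than leaving ``compare the sum with the integral'' unspecified.
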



To prove Theorem \ref{t:KZconstantUB}, we need to introduce two lemmas.
The first one is from \cite[Lemma 2]{WenC18}.

\begin{lemma}\label{l:integralbd}
For $a>b>0$ and $c>0$
\beq
\label{e:integralbd}
 \int_a^b \frac{\ln( 1+c/t)}{t} d t \leq
 \frac{9}{8} \ln \frac{b(3a+2c)}{a(3b+2c)} + \frac{c(b-a)}{4ab}.
\eeq
\end{lemma}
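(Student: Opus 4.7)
The integrand $\ln(1+c/t)/t$ has no elementary antiderivative (it evaluates to a dilogarithm), so direct integration is not viable. My plan is to find an elementary ``barrier'' function $H(t)$ whose derivative compares with the integrand pointwise and whose difference across the endpoints reproduces the claimed right-hand side exactly. The lemma will then follow by combining the pointwise derivative comparison with the fundamental theorem of calculus.

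The form of $H$ can be reverse-engineered from the right-hand side. Noting the splits $\frac{c(b-a)}{4ab} = \frac{c}{4a} - \frac{c}{4b}$ and $\ln\frac{b(3a+2c)}{a(3b+2c)} = \ln\frac{b}{3b+2c} - \ln\frac{a}{3a+2c}$, the natural candidate is
\[
H(t) := \frac{9}{8}\ln\frac{t}{3t+2c} - \frac{c}{4t},
\]
for which $H(b) - H(a)$ equals the target bound by direct substitution. A routine calculation combining fractions then yields
\[
H'(t) = \frac{c(6t+c)}{2t^2(3t+2c)}.
\]

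With $H$ in hand, the lemma reduces to the pointwise comparison between $\ln(1+c/t)/t$ and $H'(t)$ on the interval of integration. Substituting $u = c/t > 0$ collapses this to the single-variable claim
\[
\ln(1+u) \leq \frac{u(6+u)}{2(3+2u)}, \qquad u > 0.
\]
I would prove this scalar inequality by setting $f(u) := \frac{u(6+u)}{2(3+2u)} - \ln(1+u)$, observing that $f(0) = 0$, and computing the derivative. After combining over a common denominator, the polynomial numerator simplifies drastically to yield
\[
f'(u) = \frac{u^3}{(1+u)(3+2u)^2} \geq 0,
\]
which forces $f(u) \geq 0$ on $[0,\infty)$ and, after integration over the interval and the endpoint identification $H(b) - H(a) = \text{RHS}$, completes the proof.

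The main obstacle is the guess for $H(t)$: the coefficient $9/8$ and the factor $3t+2c$ are not at all obvious a priori, and one essentially has to read them off from the target right-hand side. Once $H$ is committed to, every subsequent step is mechanical, and the decisive simplification is the clean factorization of $f'(u)$ as $u^3$ divided by a manifestly positive expression, which is what makes the elementary upper bound both valid and tight at $u=0$.
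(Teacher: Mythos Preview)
The paper does not actually prove this lemma: it is quoted verbatim from \cite[Lemma~2]{WenC18} and invoked without argument. So there is no in-paper proof to compare against, and your proposal supplies a complete, self-contained argument where the paper gives none.

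Your argument is correct. The barrier $H(t)=\tfrac{9}{8}\ln\tfrac{t}{3t+2c}-\tfrac{c}{4t}$ indeed satisfies $H(b)-H(a)=\text{RHS}$ and $H'(t)=\tfrac{c(6t+c)}{2t^2(3t+2c)}$; the substitution $u=c/t$ reduces the pointwise comparison to $\ln(1+u)\le \tfrac{u(6+u)}{2(3+2u)}$, and the derivative computation $f'(u)=\tfrac{u^3}{(1+u)(3+2u)^2}\ge 0$ with $f(0)=0$ is clean and decisive.

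One caveat worth flagging: the hypothesis ``$a>b>0$'' in the stated lemma is a typo. With $a>b$ the pointwise bound $\ln(1+c/t)/t\le H'(t)$ integrates in the \emph{wrong} direction and yields $\int_a^b \ge H(b)-H(a)$, not $\le$. The intended hypothesis is $b>a>0$, which is exactly how the lemma is applied in the paper (with $a=107.5$ and $b=n-0.5\ge 108.5$). Under that corrected hypothesis your proof goes through verbatim; you may want to note the discrepancy explicitly rather than leave the orientation of the integral implicit.
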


The second lemma which is needed for proving Theorem \ref{t:KZconstantUB} is as follows:
\begin{lemma}\label{l:integralbd2}
Suppose that $f(t)$ satisfies $f''(t)\geq 0$ for $t\in [a,b]$.
Then
\beq
\label{e:integralbd2}
(b-a) f\left(\frac{a+b}{2}\right)\leq \int_a^b f(s) d s.
\eeq
\end{lemma}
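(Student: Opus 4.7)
The plan is to recognize this as the classical midpoint inequality for convex functions: the hypothesis $f''(t)\ge 0$ on $[a,b]$ is exactly convexity of $f$ on that interval, and the inequality says that the midpoint rule underestimates the integral of a convex function.

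I would proceed via the tangent-line characterization of convexity. Let $m=(a+b)/2$. Since $f''\ge0$ on $[a,b]$, for every $t\in[a,b]$ the tangent line at $m$ lies below the graph, i.e.
\[
f(t)\;\ge\;f(m)+f'(m)(t-m).
\]
Integrating this inequality over $[a,b]$ gives
\[
\int_a^b f(t)\,dt\;\ge\;(b-a)f(m)+f'(m)\int_a^b(t-m)\,dt.
\]
The second term on the right vanishes because $m$ is the midpoint: $\int_a^b(t-m)\,dt=\tfrac12[(b-m)^2-(a-m)^2]=0$. This yields \eqref{e:integralbd2} immediately.

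If one prefers not to invoke the tangent-line inequality as a black box, the same conclusion follows from the integral form of Taylor's theorem. For each $t\in[a,b]$,
\[
f(t)=f(m)+f'(m)(t-m)+\int_m^t(t-s)f''(s)\,ds,
\]
and the remainder integral is nonnegative for both $t\ge m$ and $t<m$ (in the latter case, swapping limits flips two signs). Integrating over $[a,b]$ again kills the linear term and leaves a nonnegative contribution from the remainder, giving the desired bound.

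There is no real obstacle here; the only thing to be careful about is making the vanishing of the linear term explicit (the midpoint is essential — the inequality would fail for an arbitrary interior point), and noting that one only needs $f\in C^2$ or, more generally, convexity on $[a,b]$. The lemma will then be combined with Lemma \ref{l:integralbd} to bound sums of the form $\sum \ln(1+c/k)/k$ arising in the proof of Theorem \ref{t:KZconstantUB}.
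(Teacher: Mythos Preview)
Your proof is correct and is essentially the same argument as the paper's: both expand $f$ around the midpoint $m=(a+b)/2$, integrate, and use that the linear term vanishes while the second-order remainder is nonnegative. The only cosmetic difference is that the paper carries the Lagrange remainder through to the exact midpoint-rule error formula $\int_a^b f - (b-a)f(m) = \tfrac{1}{24}(b-a)^3 f''(z)$ before invoking $f''\ge 0$, whereas you stop at the tangent-line inequality, which is slightly more direct.
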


\begin{proof}
%
The left hand side of \eqref{e:integralbd2} is referred to as the midpoint rule
for approximating  the integral on the right hand side in numerical analysis.
It is well known that
\beq \label{e:midpoint}
 \int_a^b f(s) d s - (b-a) f\left(\frac{a+b}{2}\right)
= \frac{1}{24}(b-a)^3 f''(z)
\eeq
for some $z \in (a,b)$.
This formula can be easily proved as follows.
By Taylor's theorem,
\begin{align*}
f(s) = & f\left(\frac{a+b}{2}\right) + f'\left(\frac{a+b}{2}\right) \Big(s- \frac{a+b}{2}\Big) \\
& + \frac{1}{2} f''(\zeta(s)) \Big(s- \frac{a+b}{2}\Big)^2,
\end{align*}
where $\zeta(s)$ depends on $s\in (a, b)$.
Integrating both sides of the above equality over $[a,b]$
and using the Mean-Value-Theorem for Integrals
immediately lead to \eqref{e:midpoint}.
Then using the condition that $f''(t)\geq 0$ for $t\in [a,b]$,
we obtain \eqref{e:integralbd2}.
\end{proof}

In the following, we give a proof for Theorem \ref{t:KZconstantUB}
by following the proof of \cite[Theorem 2]{WenC18}.
\begin{proof}
According to the proof of \cite[Cor. 2.5]{Sch87},
\beq
\label{e:KZUB11}
\alpha_n\leq \gamma_n\prod_{k=2}^n\gamma_k^{1/(k-1)}.
\eeq
By \cite[(53)]{WenC18}, we have
\beq
\label{e:KZUB12}
\prod_{k=2}^8\gamma_k^{1/(k-1)}= 2^{\frac{827}{420}} 3^{-\frac{8}{15}}.
\eeq
By \eqref{e:gammanWC}, we obtain
\beq
\label{e:KZUB13}
\prod_{k=9}^{108}\gamma_k^{1/(k-1)}\leq \prod_{k=9}^{108}\left(\frac{k}{8}+\frac{6}{5}\right)^{1/(k-1)}< 79.06.
\eeq

In the following, we use Theorem \ref{t:gamman} to bound $\prod_{k=109}^n\gamma_k^{1/(k-1)}$ from above.
By Theorem \ref{t:gamman}, we obtain
\begin{align}
 \prod_{k=109}^n\gamma_k^{1/(k-1)}
\leq &  \prod_{k=109}^n\left(\frac{k}{8.5} + 2\right)^{1/(k-1)} \nonumber  \\
= &  \prod_{k=108}^{n-1}\left(\frac{ k+18}{8.5} \right)^{1/k}    \nonumber  \\
= & \exp\left[\sum_{k=108}^{n-1}\frac{1}{k}\ln \left(\frac{ k+18}{8.5} \right)\right]\nonumber\\
\overset{(a)}{\leq}&\exp\left(\sum_{k=108}^{n-1}\int_{k-0.5}^{k+0.5}\frac{1}{t} \ln  \left(\frac{ t+18}{8.5} \right)dt\right)\nonumber\\
=&\exp\!\left(\int_{107.5}^{n-0.5}\frac{1}{t} \ln\left(\frac{t+18}{t}\frac{t}{8.5}\right)dt\right)\nonumber\\
 = &\exp\!\left(\int_{107.5}^{n-0.5} \frac{1}{t} \ln \left(1 +\frac{18}{t}\right) dt \! \right )\nonumber\\
 &\times \exp\! \left(\int_{107.5}^{n-0.5}\frac{\ln(t/8.5)}{t}dt \! \right),   \label{e:kz2terms}
\end{align}
where (a) follows from Lemma \ref{l:integralbd2} with $a=k-0.5, b=k+0.5$ and the fact that
 for $t\geq107.5$,
$\omega(t):=\frac{1}{t} \ln \left(\frac{ t+18}{8.5} \right)$ satisfies
\begin{align*}
\omega''(t)&\!=\!\frac{1}{t^3(t+18)^2}\left(\! 2(t+18)^2\ln \left(\frac{t+18}{8.5}\right) -(3t^2+36t)\! \right)\\
&\geq \frac{1}{t^3(t+18)^2}\left(\! 2(t+18)^2\cdot\ln \frac{125.5}{8.5} -(3t^2+36t)\! \right)\\
&\geq \frac{1}{t^3(t+18)^2}\left(2(t+18)^2\cdot2 -(3t^2+36t)\right)\geq 0.
\end{align*}

Now we bound the two factors on the right-hand side of \eqref{e:kz2terms} from above.
By Lemma \ref{l:integralbd}, we obtain
\begin{align}
& \exp\left( \int_{107.5}^{n-0.5}\frac{1}{t} \ln  \left(1 +\frac{18}{t}\right) dt \right)  \nonumber  \\
\leq \ &  \exp\left( \frac{9}{8} \ln \frac{358.5(n-0.5)}{107.5(3(n-0.5)+36)} +\frac{18(n-108)}{430(n-0.5)}  \right)  \nonumber  \\
\leq \ &  \exp\left( \frac{9}{8} \ln \frac{358.5(n-0.5)}{107.5\times3(n-0.5)} +\frac{18(n-108)}{430(n-108)}  \right)  \nonumber  \\
= \ &   \left(\frac{119.5}{107.5}\right)^{9/8} \exp\left( \frac{9}{215}\right). \label{e:1stub}
\end{align}

By a direct calculation, we have
\begin{align}
 & \exp\left(\int_{107.5}^{n-0.5}\frac{\ln(t/8.5)}{t}dt\right) \nonumber\\
=\ & \exp\left(\frac{\ln^2((n-0.5)/8.5)}{2}-\frac{\ln^2(107.5/8.5)}{2}\right)   \nonumber\\
 =\ & \left(\frac{n-0.5}{8.5}\right)^{\frac{1}{2}\ln((n-0.5)/8.5)}
 \left(\frac{8.5}{107.5}\right)^{\frac{1}{2}\ln(107.5/8.5)} \label{e:2ndub}
\end{align}

Then combining  \eqref{e:KZUB11}-\eqref{e:2ndub} and \eqref{e:gamman}, we obtain that
for $n\geq 109$
\begin{align*}
 \alpha_n
\leq \ & 79.06\times 2^{\frac{827}{420}} 3^{-\frac{8}{15}}
       \left(\frac{119.5}{107.5}\right)^{9/8} \exp\left( \frac{9}{215}\right) \\ &  \times       \left(\frac{8.5}{107.5}\right)^{\frac{1}{2}\ln\frac{107.5}{8.5}}
\left( \frac{n}{8.5} +2\right)  \left(\frac{n-0.5}{8.5}\right)^{\frac{1}{2}\ln(\frac{n-0.5}{8.5})} \\
< \ & (8.0911 \cdots)   \left( \frac{n}{8.5} +2\right)  \left(\frac{n-0.5}{8.5}\right)^{\frac{1}{2}\ln(\frac{n-0.5}{8.5})} \\
< \ & 8.1 \left( \frac{n}{8.5} +2\right)  \left(\frac{2n-1}{17}\right)^{\frac{1}{2}\ln((2n-1)/17)}.
\end{align*}
\end{proof}

\begin{remark}
Note that although the proof of Theorem \ref{t:KZconstantUB} is similar to the proof of
\cite[Theorem 2]{WenC18}, there is some difference between them.
The main difference between them is (a) in \eqref{e:kz2terms}.
Here, we use Lemma \ref{l:integralbd2} to build (a),
while the proof of \cite[Theorem 2]{WenC18} uses the decreasing property of the integrand
to get the inequality.
\end{remark}

To clearly see the improvement of \eqref{e:KZconstantUB2} over \eqref{e:KZconstantUB1},
we draw the ratio of the right-hand side of \eqref{e:KZconstantUB2} to that of \eqref{e:KZconstantUB1}
for $n=111:1:1000$ in Figure \ref{f:KZ}.
The figure shows that \eqref{e:KZconstantUB2} significantly outperforms  \eqref{e:KZconstantUB1},
and the improvement becomes more significant as $n$ gets larger.

\begin{figure}[!htbp]
\centering
\includegraphics[width=3.4 in]{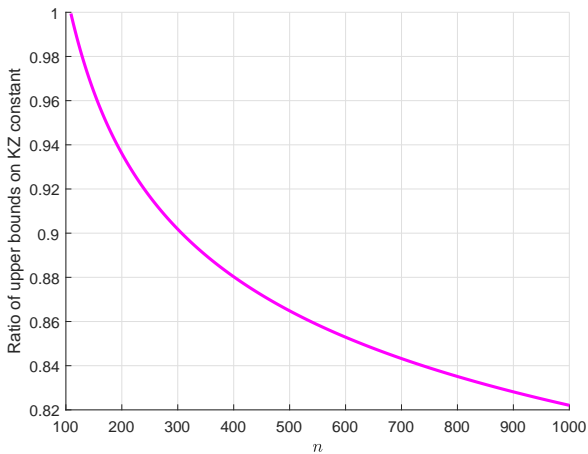}
\caption{The ratio of the bound given by \eqref{e:KZconstantUB2} to
the bound given by \eqref{e:KZconstantUB1}  versus $n$}
\label{f:KZ}
\end{figure}

In the following we give remarks about some applications of Theorem \ref{t:KZconstantUB}.

\begin{remark}
\label{r:KZpf}
As in \cite[Remark 2]{WenC18}, we can use the improved upper bound \eqref{e:KZconstantUB2} on $\alpha_n$
to derive upper bounds on the proximity factors of the KZ-aided SIC decoder and these new  bounds are sharper than those given in \cite[Remark 2]{WenC18}.
Since the derivations are straightforward, we omit its details.
\end{remark}

\begin{remark}
\label{r:KZdr}
We can use  \eqref{e:KZconstantUB2} and follow the proof of \cite[Lemma 1]{LuzSL13} to derive a lower bound on  the decoding radius of the KZ-aided SIC decoder, which
is tighter than that given in \cite[Remark 3]{WenC18} when $n\geq 111$.
\end{remark}

\begin{remark}
\label{r:KZod}
By following the proof of \cite[Theorem 3]{WenC18} and using  \eqref{e:KZconstantUB2},
we can also develop new upper bounds on the lengths of the KZ reduced matrices,
which are tighter than those given in \cite[Theorem 3]{WenC18} when $n\geq 111$.
\end{remark}

\section{Summary} \label{s:sum}
The KZ reduction is one of the most popular lattice reduction methods and has many important applications.
The Hermite constant is a basic constant of lattice.
In this paper, we first developed a new linear upper bound on the Hermite constant
and then utilized the bound to develop a new upper bound on the KZ constant.
These bounds are sharper than those developed in \cite{WenC18}.
Some  applications of the new sharper bounds on the Hermite and KZ constants were also discussed.

\appendices

\bibliographystyle{IEEEtran}
\bibliography{ref}

\end{document}